\g@addto@macro{\endabstract}{\@setabstract}
\newcommand{\authorfootnotes}{\renewcommand\thefootnote{\@fnsymbol\c@footnote}}%
\newcommand{\inD}{\mathcal{D}}
\theoremstyle{plain}
\newtheorem{theorem}{Theorem}
\newtheorem{lemma}[theorem]{Lemma}
\newtheorem{problem}{Problem}
\newtheorem{definition}[theorem]{Definition}
\newtheorem{proposition}[theorem]{Proposition}
\begin{document}
\begin{center}
  \LARGE 
  Point Set Isolation Using Unit Disks is $\mathsf{NP}$-complete \par \bigskip

  \normalsize
  \authorfootnotes
Rainer Penninger \textsuperscript{1} and Ivo Vigan\footnote{Research supported by NSF grant 1017539}\textsuperscript{2}
 \par \bigskip

  \textsuperscript{1}Dept. of Computer Science I, University of Bonn \par\smallskip
  \textsuperscript{2}Department of Computer Science, City University of New York,\\ The Graduate Center, New York\par \bigskip

\end{center}

\begin{abstract} 
We consider the situation where one is given a set $S$ of points in the plane and a collection $\mathcal{D}$ of unit disks embedded in the plane. We show that finding a minimum cardinality subset of $\mathcal{D}$ such that any path  between any two points in $S$ is intersected by at least one disk is  $\mathsf{NP}$-complete.  This settles an open problem raised in \cite{discSep}. Using a similar reduction, we show that finding a minimum cardinality $\mathcal{D}' \subseteq \mathcal{D}$ such that $\mathbb{R}^2 \setminus \bigcup ( \mathcal{D} \setminus \mathcal{D}')$ consists of a single connected region is also $\mathsf{NP}$-complete. Lastly, we show that the Multiterminal Cut Problem remains $\mathsf{NP}$-complete when restricted to unit disk graphs. 
\end{abstract}


\section{Introduction and Main Results}

In this note we show that the (decision version of the) Point Set Isolation Problem defined below in Problem 1 is $\mathsf{NP}$-complete. This problem was introduced in \cite{discSep} where a polynomial-time constant-factor approximation algorithm was presented, but the problem complexity was stated as an open problem. As a motivation for studying this problem, in \cite{discSep},  its relevance to trap coverage in sensor networks is mentioned, where one wants to detect certain spacial transitions among the observed objects  (see for example \cite{bar1}).  

\begin{problem}[Point Set Isolation Problem \cite{discSep}]
Given a set $S$ of $k$ points in the plane and a collection $\mathcal{D}$ of $n$ unit disks embedded in the plane, find a minimum subset $\mathcal{D'} \subseteq \mathcal{D}$, such that every path between two points in $S$ is intersected by at least one disk in $\mathcal{D'}$.
\label{diskP}
\end{problem}

Using similar reductions as for showing hardness of the Point Set Isolation Problem in Section \ref{secP1}, we show that the All-Cells-Connection Problem (Problem \ref{allCellP}) and the Unit Disk Multiterminal Cut Problem (Problem \ref{mutiC}) are $\mathsf{NP}$-complete in Sections \ref{allCellSec} and  \ref{MCSec} respectively.

\begin{problem}[All-Cells-Connection Problem]
Given a set $\mathcal{D}$ of unit disks embedded in the plane, find a minimum cardinality subset $\mathcal{D}'$ such that $\mathbb{R}^2 \setminus \bigcup ( \mathcal{D} \setminus \mathcal{D}')$ consists of a single connected region.
\label{allCellP}
\end{problem}

In \cite{alt} the All-Cells-Connection Problem has been shown to be $\mathsf{APX}$-hard for arrangements induced by line segments. \\ Following the reasoning of \cite{alt}, looking at the Point Set Isolation Problem as a trap cover problem, it asks for the smallest number of sensors which need to be turned on in order to prevent any point in $S$ of leaving its cell. The All-Cells-Connection Problem then asks for the minimum number of sensors to be turned off so that any point of $S$ can move freely between any previously existing cells of the sensor network.

\begin{problem}[Unit Disk Multiterminal Cut Problem \cite{multicut}]
Given a unit disk graph $G=(V,E)$ and a set $S \subseteq V$ of $k$ terminals, find a minimum cardinality set $E' \subseteq E$ such that in $G' = (V, E \setminus E')$ there is no path between any two nodes in $S$.
\label{mutiC}
\end{problem}

The general idea for showing hardness of the three problems is to take an instance of a hard problem on planar graphs and embed it on an integer grid using straight line segments. Each embedded edge then gets replaced by an edge gadget consisting of certain unit disk arrangements. The dimension of each edge gadget is chosen such that no two unit disks of different edge gadgets intersect. Furthermore, we replace each embedded vertex $v$ by a vertex gadget which consists of a cycle of unit disks which is circularly arranged around $v$. Each edge gadget of edges incident to $v$ will intersect a small number of disks contained in the vertex gadget. The main task of the reduction is to choose the radius of the disks and the dimension of the gadgets such that the edge gadgets are large enough to contain the problem-specific disk arrangements and small enough to ensure that non-incident edge gadgets are disjoint. 

\section{Hardness of  Point Set Isolation  }
\label{secP1}

In this section we prove the following theorem by reducing an instance of the Planar Subdivision Problem (Problem \ref{planarSubDev}) to it.

\begin{theorem}
The Point Set Isolation Problem is  $\mathsf{NP}$-complete if $k$ is not fixed.
\label{mainThm}
\end{theorem}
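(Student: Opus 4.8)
The plan is to prove both membership in $\mathsf{NP}$ and $\mathsf{NP}$-hardness. For membership I would take as a certificate the subset $\mathcal{D}' \subseteq \mathcal{D}$ itself, for the decision version asking whether some isolating family of size at most $m$ exists. To verify that $\mathcal{D}'$ isolates $S$, observe that a path between two points $p,q \in S$ avoiding every disk of $\mathcal{D}'$ exists if and only if $p$ and $q$ lie in the same connected component of $\mathbb{R}^2 \setminus \bigcup_{D \in \mathcal{D}'} D$. Since the arrangement of at most $n$ disks has complexity $O(n^2)$, one can compute its faces, their adjacencies, and locate each of the $k$ points in polynomial time, then check that no two points of $S$ share a cell and that $|\mathcal{D}'| \le m$. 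Hence the witness is polynomially verifiable and the problem lies in $\mathsf{NP}$.

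For hardness I would follow the gadget scheme announced in the introduction and reduce from the Planar Subdivision Problem (Problem \ref{planarSubDev}). First I would fix an embedding of the input planar graph on an integer grid of polynomial size using a standard straight-line grid-drawing result, so that every edge becomes a segment between grid points. I would then replace each vertex $v$ by a \emph{vertex gadget}: a cyclic ring of unit disks arranged around $v$, enclosing a single point of $S$ placed at $v$, with narrow passages left in the ring exactly along the directions in which incident edges depart. Each edge would be replaced by an \emph{edge gadget}, a chain of unit disks laid along its segment and interfacing with the rings of its two endpoints, engineered so that the discrete choice made in the source instance corresponds to blocking or leaving open the passage between the two adjacent cells.

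The correctness argument would establish a two-way correspondence between the optimum of the source instance and the minimum isolating subset. In one direction, a solution of the planar instance dictates which passages to seal, and selecting the corresponding disks yields an isolating family whose cardinality is a fixed affine function of the original objective: the isolation condition, that every path between two points of $S$ crosses a chosen disk, translates precisely into the separation condition of the source problem, since each $S$-point can escape its ring only through an unsealed passage. In the reverse direction I would prove a \emph{normalization lemma}, namely that any isolating subset can be modified without increasing its size so that within every gadget the chosen disks form one of the few canonical configurations that correspond to a legal choice in the planar instance; this then lets me read off a source solution no larger than the bound the forward map predicts.

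The main obstacle will be the geometry of the packing rather than the combinatorics. I must choose the disk radius and the gadget dimensions so that two competing requirements hold simultaneously: each edge gadget has to be wide enough to host the full problem-specific disk arrangement and to bridge the gap between two rings, yet the gadgets of non-incident edges must be pairwise disjoint, with no disk of one gadget touching a disk of another. Pinning down a radius that satisfies these tangency and separation inequalities for all $n$ gadgets at once, of the kind captured by an expression such as $\s$, and verifying that consecutive disks along each ring and chain leave no unintended gap through which an $S$-point could leak, is the delicate and error-prone part; the discrete reduction is then a bookkeeping consequence of these metric guarantees.
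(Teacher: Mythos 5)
Your $\mathsf{NP}$-membership argument is fine and matches the paper's (arrangement plus point location), and you correctly identify the source problem (Problem \ref{planarSubDev}) and the grid-embedding step. The flaw is in the heart of the reduction: you place the points of $S$ \emph{at the vertices}, each enclosed by a ring with passages toward incident edges. In the Planar Subdivision Problem the points lie \emph{inside the faces} of the embedded graph (at most one per face), and a solution is a set of kept edges whose embedding separates these face-points. Accordingly, the paper places each isolation point in the region of $\mathbb{R}^2 \setminus \bigcup \mathcal{D}$ corresponding to the face containing the source point. With your placement the correspondence collapses: to isolate a point sitting inside the ring of $v$, the cheapest barrier is always the local one --- the ring around $v$ together with a few ``plug'' disks at its passages. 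Disks selected in the middle of an edge chain block nothing, because the free space simply flows around them; a chain only contributes to a barrier if it is used in its entirety as part of a closed curve, which is never optimal compared to the local ring. Hence the optimum of your disk instance is essentially a sum of fixed local costs, independent of which edges of $G$ are kept or discarded, and it cannot encode the source optimum. The face placement is what forces a solution to keep (essentially) an \emph{entire} edge gadget for every edge it wants in the subdivision --- removing any single disk of an edge gadget merges the two adjacent face-regions --- and this is the mechanism that makes the disk count proportional to the number of kept edges.

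A second, related omission is the counting device that lets one read the source objective off the disk count. The paper makes every edge gadget contain the same number $C_E$ of disks regardless of the embedded edge's length (the ``cabin'' of Definition \ref{edgeGdEf} absorbs the slack), and chooses the parameters so that $nC_V < C_E$, i.e.\ one edge gadget outweighs \emph{all} vertex gadgets combined. Then $I_2$ has a solution of size at most $B$ iff $I_1$ has one of size at most $C_E(B+1)-1$ (Lemma \ref{redLem}). Your claim that the isolating family's cardinality is ``a fixed affine function of the original objective'' cannot be made exact: whether vertex-gadget disks are needed depends on the instance, so one only gets a sandwich $k_2 C_E \leq k_1 < (k_2+1)C_E$, which suffices for the decision version precisely because of the inequality $nC_V < C_E$. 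Without both ingredients --- points in faces and the equal-size/dominating-size edge gadgets --- the normalization lemma you postulate has nothing to normalize to, and the reduction does not go through.
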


\begin{problem}[Planar Subdivision Problem]
Given a simple planar graph $G=(V,E)$ embedded in the plane and a set $S$ of $k$ points properly contained in the faces of $G$ with no face containing more than one point, find a minimum cardinality set $E' \subseteq E$ such that in the embedding of the reduced graph $G'=(V',E')$, with $V' = \{v \in e : e \in E'\}$, no two points are contained in the same face.
\label{planarSubDev}
\end{problem}

\begin{proposition}
The Planar Subdivision Problem is $\mathsf{NP}$-complete if $k$ is not fixed, even on connected graphs.
\label{intProp}
\end{proposition}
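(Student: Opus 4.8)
The plan is to show membership in $\mathsf{NP}$ and then reduce from the classical Multiterminal Cut (Multiway Cut) Problem on planar graphs, which is $\mathsf{NP}$-complete when the number of terminals is part of the input \cite{multicut}. Membership is immediate: a subset $E' \subseteq E$ serves as a certificate, since one can compute the faces of the subgraph $(V',E')$ from a planar embedding and verify in polynomial time that each of the $k$ points lies in its own face. The heart of the argument is a planar-duality correspondence between keeping edges of $G$ and cutting edges of the dual $G^*$.

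Concretely, let $H$ be a connected plane graph with terminal set $T \subseteq V(H)$, an instance of planar Multiterminal Cut. I take $G$ to be the planar dual $H^*$ and place a single point in the interior of each face of $G$ that corresponds to a terminal of $H$ (recall that the faces of $H^*$ are in bijection with the vertices of $H$). Since distinct terminals give distinct faces, no face receives more than one point, and the number of points equals $|T|$, which is not fixed. Because the dual of a connected plane graph is connected, $G$ is connected, which yields the statement on connected graphs.

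The correspondence is as follows. For a chosen $E' \subseteq E(G)$, two points lie in the same face of $(V',E')$ exactly when they can be joined by a curve crossing only the deleted edges $E(G) \setminus E'$; in dual terms this means the two terminal faces are connected in $G^*$ using only the dual edges of $E(G) \setminus E'$. Hence every pair of points is separated precisely when removing the dual edges $(E')^*$ disconnects all terminals in $G^* = H$, i.e.\ when $(E')^*$ is a multiterminal cut for $T$. As this is a cardinality-preserving bijection between feasible edge sets, the minimum Planar Subdivision solution on $(G,S)$ has the same size as the minimum multiterminal cut in $H$, and the decision versions correspond directly. This gives the reduction.

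The step I expect to be the main obstacle is ensuring that the constructed instance $G$ is a \emph{simple} graph, since the dual $H^*$ can contain self-loops (arising from bridges of $H$) and parallel edges (arising from two-edge cuts of $H$). I plan to remove these by subdividing every edge of $G$ the same fixed number of times; subdividing twice suffices, as a loop then becomes a triangle and a pair of parallel edges becomes a six-cycle. This makes $G$ simple while creating no new faces and leaving the terminal-face correspondence intact. Uniform subdivision only rescales the objective by a constant factor, because in any optimal solution one may assume that the subedges of a given original edge are either all kept or all deleted (a partially kept bundle never completes a separating barrier and may be discarded). The decision thresholds transform accordingly, so the reduction remains correct.
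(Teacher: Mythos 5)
Your proposal matches the paper's proof essentially step for step: both reduce from planar unweighted Multiterminal Cut by taking the geometric dual of the embedded instance, placing one point in each face corresponding to a terminal, subdividing the dual edges to enforce simplicity, arguing that an optimal solution keeps the subdivision bundle of an original edge either entirely or not at all (so the optimum scales by exactly the subdivision factor), and noting that the dual of a connected graph is connected. The only difference is a detail of the subdivision: you subdivide twice (factor $3$) where the paper subdivides once (factor $2$), and your variant is in fact the more careful one, since a single subdivision turns a dual self-loop (arising from a bridge in the primal) into a pair of parallel edges rather than a simple graph.
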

\begin{proof}
Given an instance $I_1= (G_1, S_1)$ of the Planar unweighted Multiterminal Cut Problem \cite{multicut}, with $G_1 = (V_1, E_1)$ we embed $G_1$ in the plane and build an instance $I_2 = (G_2, S_2)$ of the Planar Subdivision Problem with $G_2 = (V_2, E_2)$. This is done by letting $G_2$ be the geometric dual graph of the embedded graph $G_1$ and further subdivide each dual edge $\{u,v\}$ into $\{u,x\}$ and $\{x,v\} $, thus-by ensuring that $G_2$ is simple. We embed $G_2$ in the plane and build the set $S_2$ by putting a point into the interior of each face of $G_2$ whose dual vertex is in $S_1$. Since for any dual edge $\{u,v\}$, taking only $\{u,x\}$ or only $\{x,v\}$ will not change the partitioning of the plane, an optimal solution $OPT$ for $I_2$ only contains pairs of subdivision edges. Letting $OPT_{\cup}$ denote the set of edges obtained from $OPT$ by merging any subdivision edges $\{u,x\} \in OPT$ and $\{x,v\} \in OPT$ into $\{u,v\}$ and let $OPT^*_{\cup} \subseteq E_1$ denote the duals of the edges in $OPT_{\cup}$. We claim that $OPT$ is an optimal solution for $I_2$ if and only if $OPT^*_{\cup}$ is an optimal solution for $I_1$. 
To see this, let $E' \subseteq E_1$ and let $E'^{*}$ be the corresponding dual edges. If two vertices $u,v \in S_1$ are connected by a path $(u, v_1, \ldots, v_l, v )$ in $G_1'  = (V_1, E \setminus E')$ then, the sequence $u^*, v^*_1, \ldots, v^*_l, v^*$ of adjacent faces in $G_2$ is merged to one face in $G_2' = (V_2, E'^{*})$ and thus the points $p_u, p_v \in S_2$ corresponding to $u,v$ are contained in the same face. By the same argument it follows that if  $p_u, p_v \in S_2$ are contained in the same face, then $u,v \in S_1$ are connected by a path. Thus $I_2$ has a solution of size $2M$ if and only if $I_1$ has a solution of size $M$. Since the Planar unweighted Multiterminal Cut Problem is $\mathsf{NP}$-complete  on connected graphs and the dual of a connected graph is connected, the Planar Subdivision Problem is $\mathsf{NP}$-complete even on connected graphs.
 \end{proof}

Since an instance of the Planar Subdivision Problem can be any jordan arc embeddeding of a graph in the plane, we need to argue that replacing the embedding by a straight line grid embedding does not change the solution of the  Planar Subdivision Problem.

\begin{lemma}
For any jordan arc embedding of a Planar Subdivision instance $(G,S)$ with $G = (V, E)$ being a connected graph on $n$ vertices, there exists a straight line embedding of $G$ on an $n \times n$ integer grid, such that every solution in the original embedding is a solution in the grid embedding and vice versa.
\label{embedLem}
\end{lemma}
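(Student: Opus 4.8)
The plan is to exploit the fact that the validity of a candidate solution to the Planar Subdivision Problem is a purely \emph{combinatorial} property of the embedding. Whether a subset $E' \subseteq E$ separates every pair of points in $S$ depends only on the rotation system of the embedded graph (the cyclic order of the edges around each vertex, together with the choice of outer face) and on the assignment of each point of $S$ to a face of $G$. Indeed, the faces of the subgraph $(V',E')$, and hence which faces of $G$ merge into a common face once the edges outside $E'$ are discarded, are determined entirely by this combinatorial data: two points lie in a common face of $(V',E')$ precisely when the faces of $G$ containing them are joined by a chain of face-adjacencies crossing only edges of $E \setminus E'$. Consequently, any two embeddings of $G$ that induce the same rotation system, the same outer face, and the same point-to-face assignment admit exactly the same valid solutions, and thus the same minimum solutions. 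This reduces the lemma to constructing a grid embedding that reproduces this combinatorial data.

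First I would record the combinatorial embedding of the given Jordan arc embedding of $G$: its rotation system, its faces, the designated outer face, and the map $\phi$ sending each point of $S$ to the face of $G$ containing it. Next I would produce a straight-line grid drawing realizing this combinatorial embedding. Since $G$ is a simple connected planar graph on $n$ vertices, I would augment it to a maximal planar graph (a triangulation) refining the recorded embedding, apply a canonical grid-drawing result --- Schnyder's theorem places an $n$-vertex triangulation on the $(n-2)\times(n-2)$ integer grid with straight-line edges, which fits inside the required $n \times n$ grid --- and then delete the added chords. The resulting straight-line drawing of $G$ realizes exactly the recorded rotation system, faces, and outer face, because the drawing procedure preserves the combinatorial embedding of the triangulation, and deleting the added edges merges the triangulated faces back into precisely the original faces of $G$.

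Finally I would place each point $p \in S$ into the interior of the grid-face corresponding under $\phi$ to its original face. Because the grid embedding carries the same rotation system and the same point-to-face assignment as the original, the observation of the first paragraph applies verbatim: a set $E'$ is a valid solution for the grid embedding if and only if it is valid for the original, and since cardinalities are preserved, the two instances have identical minimum solutions. This establishes the claimed equivalence in both directions.

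The main obstacle I anticipate is making the combinatorial-invariance argument of the first paragraph fully rigorous --- formalizing ``two points lie in the same face'' in terms of the face-adjacency (dual) structure, and verifying that the augment-then-delete step returns exactly the original face partition without disturbing the point placement. Respecting the designated outer face and the $n \times n$ size bound within the chosen drawing algorithm is a secondary, essentially bookkeeping, concern.
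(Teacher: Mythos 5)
Your proposal is correct and follows essentially the same route as the paper: triangulate $G$ while respecting its embedding, apply Schnyder's grid-drawing theorem, delete the added edges, and then argue that embeddings sharing the same rotation system have isomorphic face/dual structure and hence identical solution sets. The only cosmetic difference is that you justify the preservation of the combinatorial embedding by the drawing algorithm itself, whereas the paper invokes Whitney's theorem (maximal plane graphs are $3$-connected, hence have an essentially unique embedding); both justifications are valid and lead to the same conclusion.
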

\begin{proof}
As shown in \cite{gridEmd} every planar graph on $n$ vertices can be drawn crossing free on an $n \times n$ grid using straight line segments in time $O(n)$. Furthermore, it holds that every maximal plane graph on at least four vertices is three-connected and every three-connected graph has a unique embedding by Whitneys Theorem (modulo the choice of the outer face). While Whitneys Theorem holds for an even stronger notions of equivalence, we say that two embeddings of a connected graph are \emph{equivalent} if for each vertex all incident edges have the same circular clockwise order in both embeddings. Making $G$ maximally plane, embedding it on an $n \times n$ grid and removing the additional edges results in an equivalent grid embedding of $G$ compared to its original embedding. The only thing left to show is that two equivalent embeddings have the same set of solutions. To see this, note that the set of circular orders around the vertices uniquely defines the facial walks in the embeddings. Therefore, two equivalent embeddings have the same geometric dual graph. This holds, since two adjacent faces have some common edge in their corresponding facial walks and thus the dual of the faces are connected by an edge. Since equivalent embeddings have the same facial walks, their dual graphs are isomorphic. Thus if there are two points which are not separated in one embedding, there exists a path between their corresponding faces in the dual graph, but due to isomorphism of the dual graphs for equivalent embeddings, the two points are not separated in all equivalent embeddings.
\end{proof}

In order to prove Theorem \ref{mainThm} we reduce an instance $I_2 =(G_2, S_2)$ of the Planar Subdivision Problem, with $G_2$ being a connected embedded graph, in polynomial time to an instance $I_1 = ( {\inD}, S_1)$ of the Point Set Isolation Problem. We do this by first transforming the embedding of $G_2$ to an  \emph{equivalent} straight line embedding on an  $n \times n$ integer grid as argued in Lemma \ref{embedLem}.

 We then replace each edge in the embedding by an edge gadget defined in Definition \ref{edgeGdEf}. An edge gadget as depicted in Figure~\ref{edgeG} consists of a path of unit disks constructed in such a way that every edge gadget contains the same amount of unit disks, regardless of the length of the embedded edge. Furthermore, the dimensions of each edge gadget is chosen such that no two unit disks of different edge gadgets intersect. Having replaced each edge by an edge gadget, we replace each vertex $v$ by a vertex gadget defined in Definition \ref{vertexGdEf}. A vertex gadget for $v$ consists of a cycle of unit disks which is circularly arranged around $v$. Each edge gadget of edges incident to $v$ will intersect a small number of disks contained in the vertex gadget. If we denote the collection of all disks contained in the vertex- and edge gadgets by $ {\inD}$, then each face in $G_2$ has a corresponding connected region in $\mathbb{R}^2 \setminus  {\bigcup \inD}$. We then place for each $s_2 \in S_2$ a point $s_1$ into the region in $\mathbb{R}^2 \setminus  {\bigcup \inD}$ corresponding to the face where $s_2$ was placed in $G_2$ and add $s_1$ to $S_1$. The main task of the reduction is to choose the radius of the disks and the dimension of the gadgets such that every edge gadget consists of the same amount of disks and that all edge gadgets are disjoint. Thus, on such an instance of the Point Set Isolation problem removing any disk from an edge gadget merges two regions in $\mathbb{R}^2 \setminus  {\bigcup \inD}$. As argued in Lemma \ref{redLem}, given a solution of $I_1$, one can retrieve a solution for $I_2$ in polynomial time by removing all those edges in $G_2$ whose corresponding edge gadgets were removed in the solution of $I_1$. 
In order to retrieve the maximal number of edges removed from $G_2$ from the number of disks removed in the solution of $I_1$ we choose the number of unit disks in a single edge gadget to be larger than the number of unit disks in all the vertex gadgets combined. This then allows to retrieve the maximal number of edges removed from $G_2$ from the number of disks removed in the solution of $I_1$.


\definecolor{xdxdff}{rgb}{0.49,0.49,1}
\definecolor{qqwuqq}{rgb}{0,0.39,0}
\definecolor{qqqqff}{rgb}{0,0,1}
\definecolor{uququq}{rgb}{0.25,0.25,0.25}

\definecolor{black}{rgb}{0,0,1}
\definecolor{black}{rgb}{0.25,0.25,0.25}
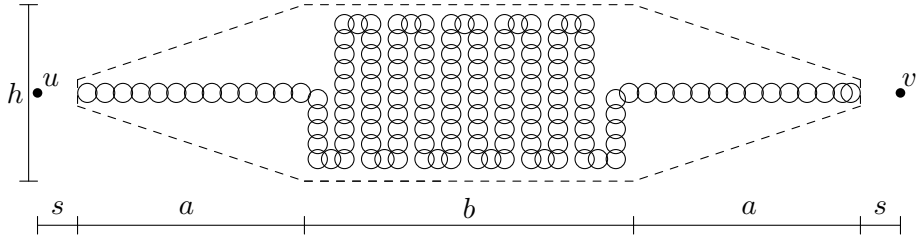
\begin{figure}
\center
\begin{tikzpicture}[line cap=round,line join=round,>=triangle 45,x=1.3cm,y=1.3cm, scale=0.9]
\clip(-0.38,-2) rectangle (10,2);


\draw [dashed ](3,1)-- (6.7,1);
\draw [dashed ](3,-1)-- (6.7,-1);
\draw [dashed ](3,-1)-- (4.54,-1);
\draw [dashed ](0.45,0.15)-- (0.45,-0.15);
\draw [dashed ] (0.45,0.15)-- (3,1);
\draw [dashed ] (0.45,-0.15)-- (3,-1);
\draw [dashed ] (9.25,0.15)-- (9.25,-0.15);
\draw [dashed ] (9.25,0.15)-- (6.7,1);
\draw [dashed ] (9.25,-0.15)-- (6.7,-1);



 \fill (0 ,0) circle (2pt);
\draw (0.15, 0.15) node {$u$};

\fill (9.7 ,0) circle (2pt);
\draw (9.8,0.15) node {$v$};

\draw (0,-1.5)-- (9.7,-1.5);

\draw (0.225,-1.3) node {$s$};
\draw (0,-1.40)-- (0,-1.6);
\draw (1.675,-1.3) node {$a$};
\draw (0.45,-1.40)-- (0.45,-1.6);
\draw (3,-1.40)-- (3,-1.6);
\draw (4.85,-1.3) node {$b$};
\draw (6.7,-1.4)-- (6.7,-1.6);
\draw (8,-1.3) node {$a$};
\draw (9.25,-1.4)-- (9.25,-1.6);
\draw (9.475,-1.3) node {$s$};
\draw (9.7,-1.4)-- (9.7,-1.6);

\draw (-0.1,-1)-- (-0.1,1);
\draw (-0.2,1)-- (0,1);
\draw (-0.2,-1)-- (0,-1);
\draw (-0.25, 0) node {$h$};


\begin{scriptsize}
\end{scriptsize}

\foreach \x in {0.56, 0.76,...,3.12} {
      \draw (\x ,0) circle (4pt);
}
\foreach \x in {6.65, 6.85,...,9.2} {
      \draw (\x ,0) circle (4pt);
}

 \draw (9.14 ,0) circle (4pt);

\foreach \x in {3.6,4.2,...,6.5} {
      \draw (\x ,0.78) circle (4pt);
}
\foreach \x in {3.3,3.9,...,6.5} {
      \draw (\x ,-0.75) circle (4pt);
}

\foreach \x in {3.45,3.75,...,6.2} {
\foreach \y in {-0.75, -0.58,...,0.8} {
      \draw (\x ,\y) circle (4pt);
}
}

\foreach \y in {-0.75, -0.58,...,0} {
      \draw (3.15 ,\y) circle (4pt);
}
\foreach \y in {-0.75, -0.58,...,0} {
      \draw (6.5 ,\y) circle (4pt);
}

\end{tikzpicture}
\caption{The edge gadget for the edge $\{u,v\}$. }
\label{edgeG}
\end{figure}


\begin{definition}
An edge gadget for an embedded edge $e = \{u,v\}$ of length $2s+2a + b$ is a path of unit disks which can be thought of as being placed in an elongated octagon of height $h$ which has a cabin of length $b$ as shown Figure~\ref{edgeG}. Every edge gadget consists of a path of $C_E$ many unit disks which is a straight path in the two hallways of length $a$ and an up-down path in the middle cabin. While $C_E$, $a$, $h$ and $s$ are constant for all edge gadgets, $b$ may vary from $1-(2s+2a)$ to $\sqrt{2}n-(2s+2a)$, depending on the length of the embedded edge $e$.
\label{edgeGdEf}
\end{definition}

We set the number $C_E$ of unit disks contained in each edge gadget to $ \lceil \frac{\sqrt{2}n-2s}{2r} \rceil$ since this amounts to the number of disks of radius $r$ needed to represent the longest edge in an $n \times n$ grid embedding as a straight line chain of unit disks. Since we force each edge gadget to contain an equal amount of disks, we need to place $C_E - a/r$ disks into the cabin of any edge gadget.  Arranging the disks in cabins of edges of length $<  \sqrt{2}n$ as an up-down path as described in Definition \ref{edgeGdEf} allows us to put a path consisting of up to $\left\lfloor\frac{h}{2r}\right\rfloor\cdot\left\lfloor\frac{b}{2r}-1\right\rfloor$ many disks into the cabin.\\
Truncating the edge gadgets by a sufficiently large distance $s$ from both of its endpoints ensures that no two unit disks contained in the hallways of adjacent edge gadgets intersect each other. According to Lemma \ref{smallAngl} the smallest angle between two edges is greater than $2\arctan 1/(6n^{2})$. Thus, setting $s$ to $r\frac{1-\sin{(2\arctan (1/6n^2)/2)}}{\sin{(2\arctan (1/6n^2)/2)}} = r(\sqrt{36n^4+1}-1)$ ensures disjointness of the disks in adjacent hallways.

\begin{definition}

A vertex gadget for an embedded vertex $v$ consists of a cycle of $C_V =  \lceil\pi s/r\rceil$ many unit disks of radius $r$ which are arranged on a circle of radius $s$ centered at $v$.
\label{vertexGdEf}
\end{definition}

In order for the construction to work we need to chose the radius $r$ of the unit disks and the height $h$ of the edge gadgets such that the following constraints hold:
\begin{eqnarray}
2(r+s) &<& 1\\
r+s+h/2 &<&  (2n^2-2n+1)^{-\frac{1}{2}}\\
2\frac{h}{2} &<& (2n^2-2n+1)^{-\frac{1}{2}}\\
h/2 &<& \frac{s+a}{6n^2}\\
\left\lceil\frac{\sqrt{2}n-2s}{2r}\right\rceil  - 2\left \lceil \frac{a}{2r} \right \rceil &\leq& \left\lfloor\frac{h}{2r}\right\rfloor\cdot\left\lfloor\frac{1-2(s+a)}{2r}-1\right\rfloor
\end{eqnarray}

with $a$ being a parameter for the gadget subdivision and $s$ being fixed to $ r\sqrt{36n^4+1}-1$ as described above.
Note that any point inside a disk in the edge gadget for edge $e$ has distance at most $h/2$ to $e$, and a point inside a disk in the vertex gadget for vertex $v$ has distance at most $r+s$ to $v$.
Thus, the first constraint ensures that no two vertex gadgets intersect
and, using Lemma~\ref{minDist}, the second constraint assures that no edge gadget intersects any vertex gadget other than the ones at its two endpoints.
Analogously, inequality 3 implies that no edge gadget intersects another edge gadget if the corresponding edges do not share a common vertex.
Inequality $4$ ensures that no disk placed in the cabin of an edge gadget intersects any disk in any incident edge gadget. The right hand side of this constraint follows from the fact that according to Lemma \ref{smallAngl} the small angle between two incident edges is larger  than $2\arctan 1/(6n^{2})$. Since the cabin of the edge gadget starts at a distance of $s+a$ from the incident vertex, restricting that the cabin extends less than $\frac{s+a}{6n^2}$ from the embedded edge ensures disjointness of disks contained in cabins of different edge gadgets.
The fifth constraint assures that the cabin of every edge gadget is big enough so that the whole gadget contains a path of $C_E$ many disks.

If we set $a=1/4$ it follows that $b$ is at least $1/2-2s$ in every edge gadget. Doing some tedious calculations shows that a radius of $r = \frac{1}{40n^4}$ and a height of $h = \frac{1}{12n^2}$ satisfies the five constraints simultaneously for all $n \geq 2$.

Having $r$ and $h$ fixed, it follows that for any edge $e= \{u,v\}$ in the graph of an instance of Problem \ref{planarSubDev}, the first and last disks in the edge gadget of $e$ intersect between one and three disks from the vertex gadgets of $u$ and $v$ respectively. 

Plugging the calculated values for $r$ and $h$ into $C_E$ and $C_V$ yields that an edge gadget consists of
$\left\lceil(20\sqrt{2}n^5-\sqrt{36n^4+1})+1\right\rceil$
many disks and a vertex gadget consists of
$\left\lceil\pi(\sqrt{36n^4+1}-1)\right\rceil$
disks and we can conclude that the above construction can be done in polynomial time.

\begin{lemma}
In an $n \times n$ grid, the minimum distance between any line $l$ through two grid points and any grid point not on $l$ is $(2n^2-2n+1)^{-\frac{1}{2}}$. 
\label{minDist}
\end{lemma}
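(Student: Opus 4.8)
The plan is to convert this Euclidean minimum into an elementary number-theoretic optimization over integer normal vectors, using the standard point-to-line distance formula together with integrality. First I would record that any line $l$ through two grid points $A,B$ has integer direction $B-A=(\Delta x,\Delta y)$; writing $g=\gcd(\Delta x,\Delta y)$ and $(p,q)=(\Delta x,\Delta y)/g$ for the reduced (primitive) direction, the line admits an integer equation $qx-py=c$ with $\gcd(p,q)=1$ and $c\in\mathbb{Z}$. For a grid point $P=(P_x,P_y)$ not on $l$ the distance is $d(P,l)=|qP_x-pP_y-c|/\sqrt{p^2+q^2}$. The numerator is an integer and is nonzero precisely because $P\notin l$, hence it is at least $1$, giving the pointwise bound $d(P,l)\geq (p^2+q^2)^{-1/2}$.

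Minimizing over all lines and off-line points therefore reduces to \emph{maximizing} $p^2+q^2$ over all primitive vectors $(p,q)$ realizable as reduced edge directions, i.e.\ over primitive $(p,q)$ with $|p|,|q|$ at most the maximal coordinate difference $n$ between two grid points. The essential subtlety, which I would emphasize, is that one must pass to the primitive direction: a long but non-primitive displacement such as $(n,n)$ reduces to $(1,1)$ and yields a harmless large distance $\geq 1/\sqrt{2}$, so it is the long \emph{primitive} directions that drive the minimum down. Among primitive $(p,q)$ with $|p|,|q|\le n$ the maximum of $p^2+q^2$ equals $n^2+(n-1)^2=2n^2-2n+1$: any vector with both coordinates at most $n-1$ has squared norm at most $2(n-1)^2<2n^2-2n+1$, whereas to use a coordinate equal to $n$ and stay primitive the other coordinate can be at most $n-1$, since $\gcd(n,n)=n>1$ but $\gcd(n,n-1)=1$. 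This establishes the lower bound $d(P,l)\geq(2n^2-2n+1)^{-1/2}$.

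Finally I would exhibit a configuration attaining the bound so that the stated value is genuinely the minimum: take the line through $(0,0)$ and $(n,n-1)$, with equation $(n-1)x-ny=0$, and the grid point $P=(n-1,n-2)$, for which $(n-1)P_x-nP_y=1$; thus $P\notin l$ and $d(P,l)=(2n^2-2n+1)^{-1/2}$ exactly, with $P$ and both endpoints lying in the grid for $n\geq 2$. The only genuinely delicate points are fixing the grid convention so that the maximal coordinate difference is exactly $n$ (consistent with an $n\times n$ grid), and the extremal argument pinning down $(n,n-1)$ — in particular ruling out the tempting but non-primitive $(n,n)$ — which is where the coprimality observation does the real work; the remainder is just the distance formula and integrality.
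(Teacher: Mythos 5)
Your proof is correct and takes essentially the same route as the paper's: the point-to-line distance formula with integer coefficients, integrality forcing the nonzero numerator to be at least $1$, reduction to a primitive direction vector, and maximization of $p^2+q^2$ over primitive directions by $(n,n-1)$. The only differences are cosmetic --- you explicitly justify the extremal claim that the paper merely asserts, and your tightness witness is $(n-1,n-2)$ where the paper uses $(1,1)$; both are valid.
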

\begin{proof}
Wlog we fix one point on $l$ to $(0,0)$. Denoting the second point on $l$ by $(a,b)$, we get a line equation of $bx-ay=0$. Thus, the distance from a point $(c,d)$ to $l$ is $\frac{|bc-ad|}{\sqrt{b^2+a^2}}$. Furthermore, we can assume that $gcd(a,b)=1$ since otherwise we can divide both coordinates by $gcd(a,b)$. 
Thus, setting $a=n$ and $b=n-1$ maximizes $a^2+b^2$, given $gcd(a,b)=1$, and the minimum non-zero distance is thus at least $(2n^2-2n+1)^{-\frac{1}{2}}$. Furthermore, observing that $|bc-ad|\geq 1$ yields that the minimum value is achieved at point $(c,d)=(1,1)$. 
\end{proof}


\begin{lemma}
In an $n \times n$ grid, for any grid point $p$ the minimum angle between any two distinct lines, each going through $p$ and at least one other grid point, is larger than $2\arctan 1/(6n^{2})$.
\label{smallAngl}
\end{lemma}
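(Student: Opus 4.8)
The plan is to translate the angle between two grid lines into an arithmetic statement about their integer direction vectors. First I would translate the grid so that $p$ sits at the origin. Then each of the two lines is spanned by an integer vector pointing to another grid point, say $u=(a_1,b_1)$ and $v=(a_2,b_2)$, where each coordinate has absolute value at most $n$, since it is a difference of two coordinates of grid points. The acute angle $\theta$ between the two lines satisfies
\[
\sin\theta=\frac{|a_1 b_2-a_2 b_1|}{\sqrt{a_1^2+b_1^2}\,\sqrt{a_2^2+b_2^2}},
\]
because the sine of the angle between the spanning vectors equals the norm of their cross product divided by the product of their lengths, and replacing the angle between the vectors by the acute angle between the lines does not change its sine.

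Next I would exploit integrality. The numerator $a_1b_2-a_2b_1$ is an integer, and it is nonzero exactly when the lines are distinct, i.e.\ when $u$ and $v$ are not parallel; hence $|a_1b_2-a_2b_1|\ge 1$. Each factor in the denominator is at most $\sqrt2\,n$, so their product is at most $2n^2$. Together these give the key estimate $\sin\theta\ge\tfrac{1}{2n^2}$.

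Finally I would convert this into the stated angular bound. Setting $\alpha=\arctan\tfrac{1}{6n^2}$, so that $\sin\alpha=\tfrac{1}{\sqrt{36n^4+1}}$ and $\cos\alpha=\tfrac{6n^2}{\sqrt{36n^4+1}}$, the double-angle formula gives $\sin(2\alpha)=\tfrac{12n^2}{36n^4+1}$. A direct comparison shows $\tfrac{1}{2n^2}>\tfrac{12n^2}{36n^4+1}$ (equivalently $12n^4+1>0$), which holds for every $n$. Since both $\theta$ and $2\alpha$ lie in $(0,\pi/2]$, where $\sin$ is strictly increasing, the chain $\sin\theta\ge\tfrac{1}{2n^2}>\sin(2\alpha)$ yields $\theta>2\alpha=2\arctan\tfrac{1}{6n^2}$, as claimed.

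I expect the only genuine content to be the integrality step: recognizing that $a_1b_2-a_2b_1$ is a nonzero integer for distinct lines is what forces the $1/(2n^2)$ gap, and everything downstream is routine. The two places that require care are making sure to work with the \emph{acute} angle between the lines (so that the sine formula and the final monotonicity argument both apply), and correctly carrying the factor of two from the $2\arctan$ in the target through the double-angle computation.
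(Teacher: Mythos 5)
Your proof is correct, and it takes a genuinely different route from the paper's. Both arguments hinge on the same arithmetic core — the determinant $a_1b_2-a_2b_1$ (the paper's $bc-ad$) is a nonzero integer, hence at least $1$ in absolute value, which forces a gap of order $1/n^2$ — but the surrounding trigonometry differs. The paper works with slopes: it writes the angle as a difference of arctangents, uses the subtraction formula $\arctan(b/a)-\arctan(d/c)$, and to control the denominator $1+(bd)/(ac)\leq 2$ it must first invoke a symmetry reduction placing both lines in the octant $\{0\leq j\leq i\leq n\}$; it then finishes with the inequality $\arctan(x)>2\arctan(x/3)$ for $0<x<\sqrt{3}$. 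You instead use the cross-product formula $\sin\theta=|a_1b_2-a_2b_1|/(\|u\|\|v\\|)$, which is invariant under rotation and reflection and is defined for vertical lines, so you need no octant reduction and no slope machinery at all; your finish is the double-angle formula plus a one-line algebraic comparison $36n^4+1>24n^4$. Your version is arguably cleaner: the paper's symmetry step quietly needs an argument for the case where the minimal-angle pair straddles an octant boundary (resolved by noting the diagonal is itself a grid line), and its slope formula breaks down for vertical lines, whereas your sine formula handles all positions of the two lines uniformly. What the paper's approach buys is that its intermediate bound arrives already in arctangent form, matching the shape of the target expression, but the conversion work you do with the double angle is entirely routine, as you note.
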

\begin{proof}
Let $g,h$ denote two lines through $p$ and $(a,b)$, $(c,d)$ respectively with minimum angle, and let the slope of $g$ be larger than the slope of $h$, Thus, having $b/a > d/c$. Wlog $p = (0,0)$ and due to symmetry we can restrict $g$ and $h$ to be contained in the lower right triangle $\{(i,j) | 0 \leq j \leq i \leq n\}$. 
Now  due to monotonicity of  $\arctan$ it holds that $\arctan b/a - \arctan d/c = \arctan \frac{bc - ad}{1-(bd)/(ac)} \geq  \arctan \frac{b/a-
d/c}{2}  \geq  \arctan  \frac{1}{2n^{2}}$. The last inequality holds since  all coordinates are integers, thus $bc - ad \geq 1$ and therefore $b/a - d/c = \frac{bc - ad}{ac} \geq 1/n^2$.  The Lemma then follows using the fact that $\arctan(x) > 2\arctan(x/3)$ holds for all $0 < x < \sqrt{3}$.
\end{proof}

In order to proof Theorem \ref{mainThm} we need to show how to retrieve a solution for an instance of Problem \ref{planarSubDev} from the solution of the disk arrangement built using the construction outlined above. 

\begin{lemma}
An instance $I_2=(G, S)$ of Problem \ref{planarSubDev} has a solution of size at most $B$ if and only if $I_1$ of the Point Set Isolation Problem has a solution of size at most $C_E(B+1)-1$, where $I_1$ is built out of $I_2$ using the construction described above.
\label{redLem}
\end{lemma}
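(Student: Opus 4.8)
The plan is to prove both implications by relating the connected regions of $\mathbb{R}^2 \setminus \bigcup \mathcal{D}'$ to the faces of the graph $G'$ obtained by keeping only those edges whose edge gadget is \emph{completely} selected, and to drive the cardinality bookkeeping with the construction's guarantee that all vertex gadgets together hold fewer than $C_E$ disks. Throughout I would use two facts from the construction: the regions of the full arrangement $\mathbb{R}^2 \setminus \bigcup \mathcal{D}$ are in bijection with the faces of $G$, and removing a single disk from any edge gadget merges the two regions lying on either side of its edge.

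First I would treat the forward implication. Given $E' \subseteq E$ with $|E'| \le B$ that solves $I_2$, let $\mathcal{D}'$ contain every disk of every vertex gadget together with the $C_E$ disks of the edge gadget of each $e \in E'$. Starting from the full arrangement and deleting the gadgets of $E \setminus E'$, each deletion merges precisely the two regions adjacent to the corresponding edge, mirroring the deletion of that edge in $G$; the retained vertex cycles only enclose point-free holes. Hence the regions of $\mathbb{R}^2 \setminus \bigcup \mathcal{D}'$ correspond to the faces of $G' = (V',E')$, and since $E'$ separates all of $S$ in $G'$, the set $\mathcal{D}'$ separates all of $S_1$. Counting gives $|\mathcal{D}'| = (\text{number of vertex disks}) + C_E|E'| \le (C_E-1) + C_E B = C_E(B+1)-1$.

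For the reverse implication, assume $\mathcal{D}'$ solves $I_1$ with $|\mathcal{D}'| \le C_E(B+1)-1$, and let $E'$ be the set of edges whose entire gadget lies in $\mathcal{D}'$. Since distinct edge gadgets are pairwise disjoint, $C_E|E'| \le |\mathcal{D}'| \le C_E(B+1)-1 < C_E(B+1)$, so $|E'| \le B$. It remains to see that $E'$ separates $S$ in $G'$. I would enlarge $\mathcal{D}'$ to $\mathcal{D}'' := \mathcal{D}' \cup (\text{all vertex disks})$; adding disks can only destroy paths, so $\mathcal{D}''$ still separates $S_1$, and since vertex disks complete no edge gadget, the complete edge gadgets of $\mathcal{D}''$ are again exactly those of $E'$. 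Every other edge gadget is missing at least one disk, so by the single-disk-removal property its two adjacent faces are already merged, and its remaining disks sit inside a corridor disjoint from all other gadgets and thus touch no other region. Therefore the effective walls of $\mathbb{R}^2 \setminus \bigcup \mathcal{D}''$ are exactly the complete vertex cycles and the complete edge gadgets of $E'$, so the face-point connectivity under $\mathcal{D}''$ equals the face structure of $G'$. As $\mathcal{D}''$ separates all of $S_1$, no two points of $S$ share a face of $G'$, i.e.\ $E'$ solves $I_2$.

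The step I expect to be the main obstacle is the topological bookkeeping in the reverse direction: arguing rigorously that a partially selected edge gadget neither separates its two incident faces nor perturbs any other region, so that the connected regions of the disk arrangement faithfully reproduce the planar face structure of $G'$. This is precisely where the pairwise disjointness of distinct edge gadgets and the single-disk-removal property must be combined; the cardinality estimates, by contrast, reduce to the two displayed inequalities once $E'$ is defined.
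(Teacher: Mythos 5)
Your proposal is correct and follows essentially the same route as the paper's proof: the forward direction selects all vertex-gadget disks plus the complete gadgets of $E'$ and uses the fact that all vertex gadgets together contain fewer than $C_E$ disks (the paper's $nC_V < C_E$), while the reverse direction extracts the fully selected edge gadgets, bounds their number by $B$ via disjointness, and argues that partially selected gadgets cannot separate regions (the paper's observation that removing any disk from an edge gadget merges its two adjacent regions). Your treatment of the reverse-direction topology is in fact somewhat more explicit than the paper's, which simply asserts these structural facts "by construction."
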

\begin{proof}
Given an instance $I_2=(G, S)$ we embed $G = (V,E)$ crossing free into an $n \times n$ grid and replace each edge in $E$ by an edge gadget as described in Definition \ref{edgeGdEf} and each vertex by a vertex gadget as described in Definition \ref{vertexGdEf}. If we denote the collection of all disks contained in vertex- and edge gadgets by $ {\inD}$, we place each $s \in S$ in the region in $\mathbb{R}^2 \setminus  {\bigcup \inD}$ corresponding to the face where $s$ was placed in the embedding of $G$. Note that by construction such a face exists and that removing any disk from any edge gadget merges two regions in $\mathbb{R}^2 \setminus  {\bigcup \inD}$ into one. On the other hand, removing disks from a vertex gadget might not merge any regions, since they can still be separated by the edge gadgets which are adjacent to the vertex gadget. Thus, a valid solution may use all or no disks of any vertex gadget. It thus, follows that $I_2$ has a solution of size $k_2$ if $I_1$ has a solution of size $k_1$ with $k_2 C_E \leq k_1 \leq k_2 C_E+nC_v < (k_2+1) C_E$, where the last inequality follows from the fact that $nC_V < C_E$. On the other hand if $I_1$ has a solution consisting of $k_1$ disks, then $I_2$ has a solution of size $k_2$ exactly when $k_1/C_E-1< k_2 \leq k_1/ C_E$ which again implies $k_2C_E \leq k_1 < (k_2+1) C_E$.
\end{proof}

The last thing to show is that the decision version of the Point Set Isolation problem is contained in  $\mathsf{NP}$, where the decision version asks if a solution to an instance of the Point Set Isolation problem exists whose size is at most $B$. To show containment in $\mathsf{NP}$, it suffices to show that  a solution to the decisions version can be verified in polynomial time. To see this, let $\mathcal{D}'$ be a possible solution to the problem. Testing whether $|\mathcal{D}'| \leq B$ is trivial. In order to test whether $\mathcal{D}'$ indeed separates all points, we construct the embedded unit disk graph of $\mathcal{D}'$ and add a vertex at every edge crossing. Thus we obtain a line segment arrangement $\mathcal{A} = \mathcal{A}(\mathcal{D}')$. We reject $\mathcal{D}'$ as a solution, if and only if two point location queries \cite{pointLoc} for distinct points in $S$ return the same face of $\mathcal{A}$. Since every line segment of $\mathcal{A}$ is completely contained in some disk of $\mathcal{D}'$, the points in $S$ are separated in $\mathcal{D}'$ if they are separated $\mathcal{A}$. On the other hand, if a point $s$ is contained in a region bounded by some disks $\mathcal{D}'_s$ which separates $s$ from all other points in $S$, then $\mathcal{D}'_s$ form a cycle in the unit disk graph and thus there exists a face in $\mathcal{A}$ containing only $s$.


\section{Hardness of All-Cells-Connection}
\label{allCellSec}
\begin{theorem}
The All-Cells-Connection Problem is $\mathsf{NP}$-complete.
\label{mainThm3}
\end{theorem}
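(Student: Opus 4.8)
The plan is to reduce from the Point Set Isolation Problem, whose $\mathsf{NP}$-completeness was just established in Theorem~\ref{mainThm}, reusing the same edge- and vertex-gadget machinery. Given an instance $(\inD, S_1)$ of Point Set Isolation arising from the construction of Section~\ref{secP1}, I would like to argue that separating the points of $S_1$ by a minimum subset of disks is essentially the same as disconnecting the complementary regions of the arrangement. The key structural fact carried over from the previous section is that the disks are laid out so that every edge gadget consists of exactly $C_E$ disks, all edge gadgets are pairwise disjoint, $nC_V < C_E$, and removing a single disk from an edge gadget merges exactly the two regions it separates while removing vertex-gadget disks alone merges nothing. This is precisely the combinatorial skeleton that lets disk-removal counts be translated back into edge-removal counts in the underlying Planar Subdivision instance.

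First I would make membership in $\mathsf{NP}$ explicit: given a candidate $\inD' \subseteq \inD$, one builds the line-segment arrangement $\mathcal{A}(\inD \setminus \inD')$ by adding a vertex at every disk-boundary crossing, computes its faces in polynomial time, and checks whether the complement $\mathbb{R}^2 \setminus \bigcup(\inD \setminus \inD')$ forms a single connected region; this is a polynomial-time verification analogous to the point-location argument at the end of Section~\ref{secP1}. Then I would set up the reduction so that an instance of Planar Subdivision is encoded through the same grid embedding and gadgets, except that here I do not place any points: the goal is merely to reconnect all cells of $\mathbb{R}^2 \setminus \bigcup \inD$ into one region. The crucial observation is that the planar dual structure makes ``every face becomes mergeable into one connected region'' correspond to ``remove a spanning set of edges in the dual graph,'' so that a minimum All-Cells-Connection solution corresponds to contracting enough edge gadgets to connect all complementary regions, i.e. to a spanning-tree-like condition on the region-adjacency graph.

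The main technical content is the counting correspondence. Since each edge gadget contributes exactly $C_E$ disks and $nC_V < C_E$, removing all disks of one edge gadget merges the two incident regions, and removing a minimal set of whole edge gadgets to connect all $F$ complementary regions into one requires exactly $F-1$ edge gadgets (the regions together with their adjacencies form a connected planar graph, and connecting $F$ nodes needs a spanning tree with $F-1$ edges). I would therefore show: the All-Cells-Connection instance has a solution of size at most $C_E \cdot t + (\text{slack below } C_E)$ if and only if the region-adjacency graph admits a connecting edge set of size $t$, using the same inequalities $k_2 C_E \leq k_1 \leq k_2 C_E + nC_V < (k_2+1)C_E$ that appear in Lemma~\ref{redLem} to read off the integer $t$ uniquely from the disk count. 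Hardness then follows because determining the minimum number of dual edges whose contraction connects all faces is itself $\mathsf{NP}$-hard through the same chain (Planar Subdivision, Proposition~\ref{intProp}), so I would phrase the reduction directly from Planar Subdivision or from the Point Set Isolation construction itself.

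The hard part will be verifying that the clean ``remove whole edge gadgets'' picture is actually forced on an \emph{optimal} solution: a priori an All-Cells-Connection solution could remove a few disks from several different edge gadgets to create new connections more cheaply than emptying one gadget entirely. I expect the argument to rely on the geometric separation guaranteed by the constraint inequalities in Section~\ref{secP1} --- namely that partial removals within one gadget cannot merge regions that are not already adjacent across that gadget, so that the only way to connect two regions is to delete every disk along some gadget path between them, pinning the cost of each new adjacency to a full $C_E$ block (up to the $nC_V$ vertex-gadget slack). Establishing that this local-to-global rigidity holds, and hence that no optimal solution can ``cheat'' by mixing partial gadget removals, is where the bulk of the careful geometric reasoning will go; once it is in place, the integer-counting equivalence follows exactly as in Lemma~\ref{redLem}.
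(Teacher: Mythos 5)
Your proposal reverses the logic of the Section~\ref{secP1} gadgets, and the reversal is fatal. In the Point Set Isolation reduction a solution \emph{keeps} disks: a barrier separating two points must contain an entire edge-gadget path, which is why costs come in blocks of $C_E$ and why Lemma~\ref{redLem} works. In All-Cells-Connection a solution \emph{removes} disks, and here the structural fact you yourself quote in your first paragraph works against you: removing \emph{any single disk} from an edge gadget already merges the two regions adjacent to that gadget (the paper states this explicitly). Hence the ``rigidity'' you hope to establish in your last paragraph --- that the only way to connect two regions is to delete every disk along a gadget path, pinning each new adjacency to a full $C_E$ block --- is not merely hard to prove, it is false, and your proposal is internally inconsistent on this point. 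On the arrangements your reduction produces, every single-disk removal merges at most the two or three cells that disk touches, so an optimal All-Cells-Connection solution is obtained by removing one disk per edge of a spanning tree of the region-adjacency graph; these instances are solvable in polynomial time and their optima carry no information about the Planar Subdivision optimum.

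There is also a deeper obstruction: the combinatorial problem you propose to reduce from is itself polynomial. Asking for the minimum number of edges of a plane graph whose removal merges all faces into one is equivalent (via Euler's formula: a plane graph has a single face iff it is a forest) to the feedback \emph{edge} set problem, i.e.\ a spanning forest computation. No reduction in which each removal merges only two cells can therefore establish $\mathsf{NP}$-hardness. This is precisely why the paper reduces instead from planar Feedback \emph{Vertex} Set with maximum degree $4$: it keeps the edge gadgets only as simple disk paths, discards the $C_E$/$nC_V$ counting entirely, and introduces a different vertex gadget (Figure~\ref{fvsVG}) whose single center disk, when removed, merges \emph{all} (up to four) faces incident to the corresponding vertex at once --- geometrically mimicking vertex deletion. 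With that gadget, $G$ has a feedback vertex set of size $k$ iff $k$ disk removals suffice, and hardness follows; without some gadget of this ``one removal merges many cells'' type, no variant of your approach can succeed.
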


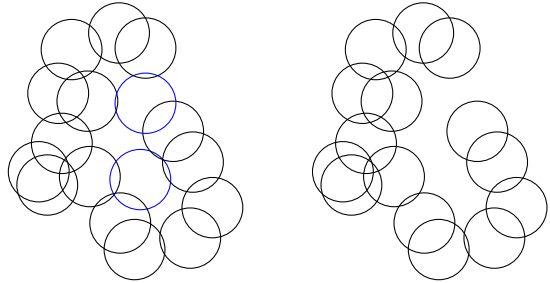
\begin{figure}
\center
\subfigure{
\definecolor{xdxdff}{rgb}{0.66,0.66,0.66}
\definecolor{qqqqff}{rgb}{0.33,0.33,0.33}
\begin{tikzpicture}[line cap=round,line join=round,>=triangle 45,x=.4cm,y=.4cm]
\clip(2,-0.2) rectangle (10.4,9.4);
\draw(4.12,4.34) circle (0.4cm);
\draw(5.04,3.24) circle (0.4cm);
\draw(3.64,2.96) circle (0.4cm);
\draw [color=blue] (6.7,3.14) circle (0.4cm);
\draw(7.78,4.74) circle (0.4cm);
\draw(4.96,5.74) circle (0.4cm);
\draw [color=blue] (6.87,5.67) circle (0.4cm);

\draw(3.36,3.4) circle (0.4cm);
\draw(6.04,1.7) circle (0.4cm);
\draw(4,6) circle (0.4cm);
\draw(4.44,7.45) circle (0.4cm);
\draw(6,8) circle (0.4cm);
\draw(6.87,7.5) circle (0.4cm);
\draw(6.51,0.82) circle (0.4cm);
\draw(8.34,1.2) circle (0.4cm);
\draw(9.07,2.21) circle (0.4cm);
\draw(8.43,3.72) circle (0.4cm);	
\end{tikzpicture}
}
\subfigure{
\definecolor{xdxdff}{rgb}{0.66,0.66,0.66}
\definecolor{qqqqff}{rgb}{0.33,0.33,0.33}
\begin{tikzpicture}[line cap=round,line join=round,>=triangle 45,x=.4cm,y=.4cm]
\clip(2,-0.2) rectangle (10.4,9.4);
\draw(4.12,4.34) circle (0.4cm);
\draw(5.04,3.24) circle (0.4cm);
\draw(3.64,2.96) circle (0.4cm);
\draw(7.78,4.74) circle (0.4cm);
\draw(4.96,5.74) circle (0.4cm);
\draw(3.36,3.4) circle (0.4cm);
\draw(6.04,1.7) circle (0.4cm);
\draw(4,6) circle (0.4cm);
\draw(4.44,7.45) circle (0.4cm);
\draw(6,8) circle (0.4cm);
\draw(6.87,7.5) circle (0.4cm);
\draw(6.51,0.82) circle (0.4cm);
\draw(8.34,1.2) circle (0.4cm);
\draw(9.07,2.21) circle (0.4cm);
\draw(8.43,3.72) circle (0.4cm);	
\end{tikzpicture}
}
\label{allCellEx}
\caption{An instance $\mathcal{D}$ of the All-Cells-Connection Problem with its optimal solution $\mathcal{D}'$ depicted in blue on the left and the resulting arrangement on the right where $\mathbb{R}^2 \setminus \bigcup ( \mathcal{D} \setminus \mathcal{D}')$ consists of a single connected region.}
\end{figure}

\begin{proof}
In order to prove Theorem \ref{mainThm3}, we are going to use a restricted version of the Feedback Vertex Set Problem (FVS). Given a graph, FVS asks for the minimum cardinality subset of vertices to be removed such that the remaining graph is acyclic. In \cite{fvsPhd} (see also \cite{fvsPaper}), it is shown that FVS is $\mathsf{NP}$-complete in undirected planar graphs with maximum degree $4$. Given such an instance $G=(V,E)$, we embed it into a grid and replace each edge with the edge gadget\footnote{The fact that all edge gadgets contain the same amount of disks is irrelevant for the reduction. The only relevant property is that the disks in each edge gadget form a simple path.} of Definition \ref{edgeGdEf}.  For each vertex $v$, we build a vertex gadget as shown in Figure \ref{fvsVG}. Note that all edge gadgets of edges incident to $v$ end at a distance of $s = r(\sqrt{36n^4+1}-1)$ from $v$. This fact together with Lemma \ref{smallAngl} implies that there is enough space for up to four disjoint concentric circular unit disk paths connecting the incident edge gadgets to the (red) center disk. We connect each of those paths to the center disk in such a way that removing the center disk merges all of the at most four faces incident to $v$ into one. The resulting unit disk arrangement  $\mathcal{D}$ consists of $|E|$ simple unit disk paths. It is thus clear that $G$ has a FVS of size $k$ if and only if there is a set $\mathcal{D}' \subseteq \mathcal{D}$ of size $k$ such that $\mathbb{R}^2 \setminus \bigcup ( \mathcal{D} \setminus \mathcal{D}')$ consists of a single connected region.
\end{proof}

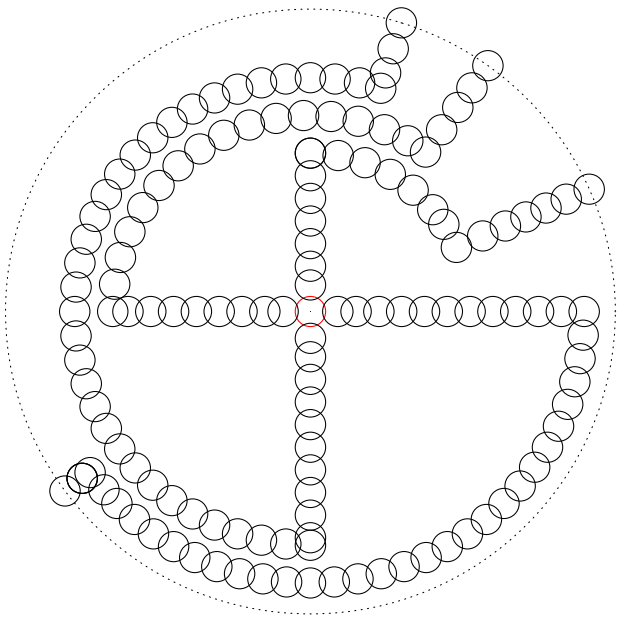
\begin{figure}
\center
\begin{tikzpicture}[line cap=round,line join=round,>=triangle 45,x=1.0cm,y=1.0cm, scale=0.2]
\clip(-21,-21) rectangle (21, 21);
\draw[dotted](0,0) circle (20.05cm);
\draw [red] (0,0) circle (1cm);
\draw(5.98,19.14) circle (1cm);
\draw(0,1.76) circle (1cm);
\draw(1.8,0) circle (1cm);
\draw(0,-1.78) circle (1cm);
\draw(-1.8,0) circle (1cm);
\draw(11.67,16.3) circle (1cm);
\draw(18.33,8.11) circle (1cm);
\draw(-16.14,-11.9) circle (1cm);
\draw(-15.01,-11.06) circle (1cm);
\draw(-15.01,-11.06) circle (1cm);
\draw(-15.01,-11.06) circle (1cm);
\draw(16.82,7.45) circle (1cm);
\draw(15.49,6.86) circle (1cm);
\draw(14.17,6.27) circle (1cm);
\draw(-14.49,-10.68) circle (1cm);
\draw(-13.58,-11.82) circle (1cm);
\draw(5.44,17.42) circle (1cm);
\draw(4.94,15.82) circle (1cm);
\draw(4.62,14.8) circle (1cm);
\draw(3.23,15.16) circle (1cm);
\draw(10.62,14.83) circle (1cm);
\draw(9.69,13.53) circle (1cm);
\draw(8.63,12.05) circle (1cm);
\draw(7.57,10.57) circle (1cm);
\draw(6.4,11.32) circle (1cm);
\draw(12.82,5.67) circle (1cm);
\draw(11.33,5.01) circle (1cm);
\draw(9.6,4.25) circle (1cm);
\draw(8.77,5.78) circle (1cm);
\begin{scriptsize}
\fill [color=black] (20.05,0) circle (1.5pt);
\fill [color=black] (5.98,19.14) circle (1.5pt);

\fill  (0,0) circle (1pt);

\foreach \x in {90, 95,...,227} {
 \draw  (   {18 * sin( \x )} , {18 * cos(\x ) }  )   circle (1cm )  ;
}

\foreach \x in {180, 186,...,367} {
 \draw  (   {15.5 * sin( \x )} , {15.5 * cos(\x ) }  )   circle (1cm )  ;
}

\foreach \x in {270, 278,...,384} {
 \draw  (   {13 * sin( \x )} , {13 * cos(\x ) }  )   circle (1cm )  ;
}

\foreach \x in {0, 10,...,50} {
 \draw  (   {10.5 * sin( \x )} , {10.5 * cos(\x ) }  )   circle (1cm )  ;
}

\foreach \x in { 3,4.5,...,17} {
 \draw  (  {\x}, 0)   circle (1cm )  ;
}

\foreach \x in {3,4.5,...,15.5} {
 \draw  (  0, -{\x} )   circle (1cm )  ;
}

\foreach \x in {3,4.5,...,13} {
 \draw  (  -{\x},0 )   circle (1cm )  ;
}

\foreach \x in {3,4.5,...,10.5} {
 \draw  (  0, {\x}  )   circle (1cm )  ;
}

\end{scriptsize}
\end{tikzpicture}
\label{fvsVG}
\caption{A vertex gadget for the reduction from FVS to All-Cells-Connection. The red disk is centered at the corresponding vertex in the grid embedding. The dashed circle has radius $s = r(\sqrt{36n^4+1}-1)$ as computed in Section \ref{secP1}.}
\end{figure}

\section{Hardness of Multiterminal Cut on Unit Disk Graphs}
\label{MCSec}

In this section we prove Theorem \ref{mainThm2} by reducing a restricted version of the Multiterminal Cut problem, described in Theorem \ref{spec} to the Unit Disk Multiterminal Cut Problem.

\begin{theorem}  
The Multiterminal Cut Problem remains $\mathsf{NP}$-complete on unit disk graphs if $k$ is not fixed.
\label{mainThm2}
\end{theorem}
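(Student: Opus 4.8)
\textbf{Proof proposal for Theorem \ref{mainThm2}.}

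The plan is to reduce from the Planar Multiterminal Cut Problem, which is already known to be $\mathsf{NP}$-complete, and to realize its geometry using unit disks in essentially the same way as the edge- and vertex-gadget machinery developed in Section~\ref{secP1}. First I would take an instance $(G,S)$ of the planar (unweighted) Multiterminal Cut Problem with $G=(V,E)$ a connected planar graph, embed it crossing-free on an $n\times n$ integer grid via Lemma~\ref{embedLem}, and replace each embedded edge by the edge gadget of Definition~\ref{edgeGdEf} and each embedded vertex by a vertex gadget. The guiding principle is the same as before: the radius $r=1/(40n^4)$ and height $h=1/(12n^2)$ satisfy inequalities $(1)$--$(5)$, so no two disks of non-incident gadgets intersect, each edge gadget is a simple path of exactly $C_E$ disks, and $nC_V<C_E$. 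The resulting collection of disks defines a unit disk graph $G'=(V',E')$, where adjacency is disk intersection, and I would place the $k$ terminals of the new instance $S'$ on the disks corresponding to the grid images of the vertices in $S$.

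The key structural fact I would exploit is that each edge gadget, being a simple path of $C_E$ disks, corresponds to a path in the unit disk graph whose internal cut value is governed by its weakest link; removing a single edge of the disk graph inside an edge gadget disconnects the path carrying the signal of the corresponding graph edge. Because $C_E$ is the same large constant for every edge gadget and dominates the total size $nC_V$ of all vertex gadgets combined, cutting edges inside the disk graph can be ``charged'' back to cutting edges of $G$ in exactly the cost-amplified manner of Lemma~\ref{redLem}. I would establish the correspondence: two terminals $s,t\in S'$ are connected in the remaining unit disk graph if and only if the corresponding terminals in $G$ are connected after removing the associated edges. Thus a multiterminal cut of $G$ of size $B$ yields a unit disk multiterminal cut of size roughly $C_E\cdot B$ (plus lower-order terms absorbed by the $nC_V<C_E$ slack), and conversely an optimal unit disk cut, by the path structure, never profits from cutting edges strictly inside vertex gadgets without also paying for a full edge gadget, so it too factors through a cut of $G$.

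The main obstacle I anticipate is verifying that an optimal cut in the unit disk graph is forced into the ``canonical'' form of cutting whole edge gadgets rather than exploiting the internal edges of the vertex gadgets or the intersections where an edge gadget meets a vertex gadget. Concretely, I must argue that severing disk-graph edges at the junctions (where the first or last disk of an edge gadget meets between one and three disks of a vertex gadget) cannot separate terminals more cheaply than the clean gadget-by-gadget accounting suggests; this is exactly the role played by the choice $nC_V<C_E$, which guarantees that it is never worthwhile to dismantle a vertex gadget edge-by-edge in lieu of paying the single unit cost of a correct cut. Once this rounding/charging argument is in place, the biconditional ``$(G,S)$ has a multiterminal cut of size $B$ if and only if $(G',S')$ has a unit disk multiterminal cut of size $C_E(B+1)-1$'' follows as in Lemma~\ref{redLem}, and membership in $\mathsf{NP}$ is immediate since a candidate edge set $E'\subseteq E$ can be verified in polynomial time by checking $|E'|\le B$ and testing pairwise terminal connectivity in $G'=(V,E\setminus E')$ via breadth-first search. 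This completes the reduction and establishes $\mathsf{NP}$-completeness.
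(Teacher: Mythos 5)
Your reduction does not go through as written, and the failure is concrete. Both problems in your proposed correspondence count different things: Lemma~\ref{redLem} counts \emph{disks kept as a barrier}, whereas the Unit Disk Multiterminal Cut Problem counts \emph{edges of the unit disk graph removed}. First, this makes your cost accounting wrong: an edge gadget of Definition~\ref{edgeGdEf} is a simple path of disks, so severing the connectivity it carries costs a \emph{single} disk-graph edge (any edge along the path), not $C_E$ of them. There is no cost amplification, the claimed bound $C_E(B+1)-1$ is meaningless here, and the slack $nC_V<C_E$ plays no role, since it bounds numbers of disks rather than edge connectivity. Second, and fatally, the Section~\ref{secP1} vertex gadget is a single cycle of $C_V$ disks, which can be split into two arcs by cutting just $2$ disk-graph edges. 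This simulates \emph{splitting a vertex} of $G$, something no edge cut of $G$ can do, and it can be strictly cheaper than any legitimate cut: take two terminals $t_1,t_2$ each joined to a common non-terminal vertex $c$ by $m$ internally disjoint paths of length $2$, embedded so that the $t_1$-side attachments and the $t_2$-side attachments are consecutive around $c$'s cycle (as they are in the natural planar embedding). The minimum multiterminal cut in $G$ has size $m$, but in your disk graph one can cut the cycle of $c$'s gadget at two places, at cost $2$, and separate $t_1$ from $t_2$. Hence a small unit-disk cut does not translate back into a small cut of $G$, and your ``if and only if'' fails.

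This is exactly why the paper does \emph{not} reuse the Section~\ref{secP1} gadgets for Theorem~\ref{mainThm2}. It reduces from the restricted problem of Theorem~\ref{spec} (planar, maximum degree $3$, weights in $\{1,\ldots,5\}$), encodes an edge of weight $w$ by $w$ parallel disk paths inside the edge gadget, and builds the vertex gadget from $16$ slightly perturbed copies of every disk in a cycle-plus-spoke structure, with a centroid disk $c(v)$ carrying the terminal. Then any cut taken inside a vertex gadget costs at least $16$ edges, while disconnecting $c(v)$ from all other terminals by cutting the at most three incident edge gadgets costs at most $3\cdot 5=15$; so optimal solutions never touch vertex gadgets, vertex splitting is priced out, and the weights transfer exactly (a solution of size $k$ in $G$ corresponds to a solution of size $k$ in the disk graph). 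You correctly flag the junction/vertex-gadget issue as ``the main obstacle,'' but the tool you invoke against it cannot resolve it; without a redundant multi-copy vertex gadget, and without bounding the degree and weights of the source instance so that the redundancy beats the total cost of the incident edge gadgets, the reduction cannot be repaired along the lines you describe.
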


\begin{theorem}  \cite{multicut}
The Multiterminal Cut Problem is  $\mathsf{NP}$-complete for planar graphs if $k$ is not fixed, even if the edge weights are in $\{1, \ldots, 5\}$ and the maximum vertex degree is $3$.
\label{spec}
\end{theorem}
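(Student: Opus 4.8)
The plan is to prove both membership in $\mathsf{NP}$ and $\mathsf{NP}$-hardness; since the edge weights are bounded by the constant $5$, the decision version asking whether some $E'$ of total weight at most $W$ disconnects all terminal pairs is clearly in $\mathsf{NP}$: a nondeterministically guessed $E'$ is verified by summing its weight and running a linear-time connectivity check on $G=(V, E\setminus E')$ to confirm that no two terminals lie in the same component. The substance of the statement is hardness under the simultaneous restrictions that the instance be planar, have maximum degree $3$, and use only weights from $\{1,\dots,5\}$, so the whole difficulty is in a reduction that respects all three constraints at once.

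For the hardness direction I would reduce from a bounded-occurrence planar form of satisfiability (Planar 3-SAT with each variable appearing in a constant number of clauses is $\mathsf{NP}$-complete), exploiting the standard reformulation of Multiterminal Cut as a labeling problem: removing $E'$ so that the terminals are pairwise separated is equivalent to assigning to every vertex the label of exactly one terminal (with terminal $t_i$ forced to label $i$), where the cost of the assignment is the total weight of the bichromatic edges. The reduction builds, for each variable $x$, a \emph{selection gadget}: a small planar cluster wired to two dedicated terminals $t_x^{0}$ and $t_x^{1}$ whose minimum-cost labeling has exactly two optimal states, one routing its output edges as if $x$ were true and one as if $x$ were false, any ``mixed'' labeling being strictly more expensive. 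Each \emph{clause gadget} is then attached to the three literal outputs of its variables so that its local minimum cost is achieved precisely when at least one incoming literal is in its satisfying state, and is forced to pay one extra unit otherwise. Because the number of variables, and hence of terminal pairs, grows with the instance, $k$ is genuinely unbounded, which is consistent with the fact that for fixed $k$ the planar problem is tractable.

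Correctness then follows from a budget argument: one fixes a threshold $W$ equal to the sum of the per-gadget optimal costs, shows that any labeling meeting the budget must put every selection gadget in one of its two clean states (so that it reads off a consistent truth assignment) and must satisfy every clause gadget, and conversely that any satisfying assignment yields a labeling of cost exactly $W$. Planarity is preserved by laying out the gadgets along a planar embedding of the 3-SAT incidence graph and routing the literal connections without crossings, which is possible precisely because the formula is planar and of bounded occurrence; the bounded-degree and small-weight requirements are met by designing every gadget natively from degree-$3$ vertices and by absorbing the few needed edge multiplicities into integer weights, which a short case analysis keeps within $\{1,\dots,5\}$.

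I expect the main obstacle to be the simultaneous enforcement of all three restrictions inside the gadgets. Keeping the maximum degree at $3$ forbids the usual trick of attaching many heavy edges to a single hub to pin a vertex to a terminal, so the hardwiring of terminals and the penalty structure of the clause gadget must instead be distributed over a bounded-degree subgraph; but distributing the weight tends to inflate the integer weights, which must nonetheless stay at most $5$. Reconciling these two pressures — forcing the intended optimal labelings while never needing a weight above $5$ and never exceeding degree $3$ — is the delicate part, and I would verify it by an explicit finite case analysis of each gadget's optimal labelings, checking that every deviation from an intended state incurs at least one unit of extra cut weight.
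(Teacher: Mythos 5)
You should first be aware that the paper contains no proof of this statement to compare against: Theorem~\ref{spec} is imported verbatim from Dahlhaus et al.~\cite{multicut} and used as a black box for the reduction in Section~\ref{MCSec}. So your proposal has to stand on its own as a proof of the cited result. Its overall strategy --- reduction from a planar, bounded-occurrence form of 3-SAT, read through the standard labeling reformulation of multiterminal cut, with a global budget argument --- is in fact the same route as the original proof in \cite{multicut}, and your $\mathsf{NP}$-membership argument is correct.

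As a proof, however, there is a genuine gap: the entire technical content of the theorem is the explicit construction of selection and clause gadgets that are simultaneously planar, of maximum degree $3$, weighted in $\{1, \ldots, 5\}$, and have exactly the intended set of optimal labelings --- and no gadget is ever exhibited. You correctly identify the central tension yourself: with degree at most $3$ and weights at most $5$, any vertex, including a terminal, can be severed from the whole graph at cost at most $15$, so terminals cannot be pinned by heavy hub edges, and every budget argument must explicitly rule out the cheap alternative of isolating a terminal or an entire gadget rather than labeling it cleanly. Your sketch resolves this only by announcing a ``short case analysis'' that is never carried out. Likewise, the claims that mixed states of a selection gadget are strictly more expensive, that an unsatisfied clause gadget pays exactly one extra unit, and that the per-variable terminal pairs $t_x^0, t_x^1$ can be separated within budget without interfering with neighboring gadgets, are all load-bearing and unverified; it is precisely at these points that such reductions typically fail or force weights outside $\{1,\ldots,5\}$. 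What you have is a plausible and historically accurate blueprint, but the deferred case analysis \emph{is} the theorem, so the proposal does not yet constitute a proof.
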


For the reduction we use an edge gadget as shown in Figure \ref{edgeG2} which has the same dimension as the edge gadget of Definition \ref{edgeGdEf}. Thus, any distance bounds computed in the last section also apply here. 
We replace the single paths in the edge gadgets of Definition \ref{edgeGdEf} by $w$ many paths connecting each of the paths in the cabin to $w$ slightly perturbed copies of a single disk touching the boundary. Figure \ref{edgeG2} shows an example of an edge gadget for an edge of weight $5$. By this construction we achieve that the minimum number of edges in the unit disk graph which have to be removed to disconnect any $uv$-path inside the edge gadget for $\{u,v\}$ is $w$. For a vertex $v$ the vertex gadget shown in Figure \ref{vertexG2} consists of 16 slightly perturbed copies of a cycle of $C_V = \lceil \pi s/r \rceil$ disks of radius $r$ which are arranged on a circle of radius $s$. We denote the arrangement of those 16 copies by $\sigma$. Furthermore, the centroid disk representing $v$ is connected to  $\sigma$ by 16 slightly perturbed copies of a path of unit disks which we denote by $\gamma$. Since the shape of the edge gadgets and the radius of the vertex gadgets are the same as in the last section, it holds that all edge gadgets are disjoint and no edge gadget intersects any vertex gadget other than the ones of its end vertices, where the $w$ copies of its last disk intersect between one and three 16-disk clusters of the vertex gadget.\\
For the reduction we take an instance $I=(G, S)$ of the restricted Multiterminal Cut Problem described in Theorem \ref{spec} and embed $G = (V,E)$ crossing free into an $n \times n$ grid, replace each edge  by an edge gadget and each vertex by a vertex gadget as described above, thus obtaining the embedded unit disk graph $G'$. Furthermore, we let $S'$ correspond to the vertices in $S$. In order to proof Theorem \ref{mainThm2} we need to show how to retrieve a solution for $I$ from the solution of the Unit Disk Multiterminal Cut Problem $(G', S')$ in polynomial time.
\begin{lemma}
An instance $I=(G, S)$ of the restricted Multiterminal Cut Problem described in Theorem \ref{spec} has a solution of size $k$ if and only if $I'=(G', S')$ of the  Unit Disk Multiterminal Cut Problem has a solution of size $k$, where $I'$ is built out of $I$ using the construction described above.
\end{lemma}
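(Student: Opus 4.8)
The plan is to prove the two implications separately and to lean throughout on the following structural observation about $G'$: connectivity between the centroid disks of two distinct vertex gadgets can only be realized through the edge gadgets joining them. Indeed, each vertex gadget is internally connected, its $16$ copies of the cycle $\sigma$ together with the $16$ copies of the connecting path $\gamma$ linking every incident arm to the centroid disk; an intact edge gadget for $\{u,v\}$ keeps the centroids of $u$ and $v$ connected, while a \emph{broken} edge gadget, one all of whose $w$ internal strands have been severed, disconnects them. Consequently, after deleting any edge set from $G'$, two centroids are connected if and only if the corresponding vertices are connected in $G$ once we delete exactly those edges whose gadgets have been broken. Since the terminals $S'$ sit on the centroid disks, a deleted edge set is a valid multiterminal cut of $G'$ precisely when the broken gadgets correspond to a valid multiterminal cut of $G$.

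For the forward implication I would start from an optimal weighted multiterminal cut $E' \subseteq E$ of total weight $k$ and, for each $e \in E'$ of weight $w_e$, delete the $w_e$ unit-disk-graph edges forming the bottleneck of the edge gadget of $e$. Such a bottleneck of size exactly $w_e$ exists by the construction, which guarantees that the minimum number of edges needed to disconnect a $uv$-path inside the gadget is $w_e$. The resulting set has size $\sum_{e \in E'} w_e = k$, and by the mirroring observation it separates all terminals; hence $I'$ admits a solution of size $k$.

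For the converse I would take a minimum multiterminal cut $F$ of $G'$ and argue it may be assumed to delete edges only inside edge gadgets, with each affected gadget contributing at least its weight. Breaking the gadget for $e$ costs at least $w_e$ edges by the same bottleneck property, so collecting the broken gadgets yields an edge set $E' \subseteq E$ of total weight at most $|F|$, which by the mirroring observation is a multiterminal cut of $G$. Together with the forward direction this pins down $\mathrm{OPT}(I) = \mathrm{OPT}(I')$ and gives the stated biconditional at the threshold $k$.

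The hard part is justifying the exchange step that removes from $F$ any deletions lying inside a vertex gadget. Here I would exploit that, by Theorem~\ref{spec}, every vertex of $G$ has degree at most $3$ and every edge weight is at most $5$, so the weight incident to any vertex totals at most $15$. Because each vertex gadget is assembled from $16$ mutually perturbed copies of $\sigma$ and of $\gamma$, any deletion that changes the connectivity realized through that gadget, be it isolating the centroid or separating two incident arms, must sever at least $16$ parallel edges, strictly more than $15$. Thus any solution cutting through a vertex gadget can be rerouted to cut the at most three incident edge gadgets instead without increasing its size, after which $F$ meets only edge gadgets and the preceding argument applies. Confirming that the $16$ copies remain genuinely edge-disjoint inside the gadget, so that a through-cut really does cost at least $16$, is the step that needs the angular-separation bound of Lemma~\ref{smallAngl} and the most care.
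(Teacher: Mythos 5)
Your proposal is correct and follows essentially the same route as the paper's proof: the key points in both are that an edge gadget has a bottleneck of exactly $w(e)$ parallel strands, and that cutting through a vertex gadget costs at least $16$ edges while isolating a centroid via its at most three incident edge gadgets costs at most $3\cdot 5 = 15$, so optimal solutions never touch vertex gadgets. Your write-up just makes explicit (via the exchange/rerouting step and the mirroring observation) what the paper states more tersely.
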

\begin{proof}
Given $I=(G, S)$ we create the embedded unit disk graph $G'$ using the reduction described above. Note that an optimal solution of the restricted Multiterminal Cut Problem does not remove any edge of the vertex gadget for any vertex $v$, since removing less than 16 edges from the vertex gadget does not disconnect the unit disk graph. On the other hand, removing at most 5 edges inside each of the at most three adjacent edge gadgets will disconnected  $c(v)$ from any other vertex in $S'$. Furthermore, for an edge $e$ with weight $w(e)$, removing $w(e)$ edges in its edge gadget disconnects the two endpoints of the gadget, while removing fewer than $w$ edges still keeps the two endpoints connected and thus, the lemma follows.
\end{proof}
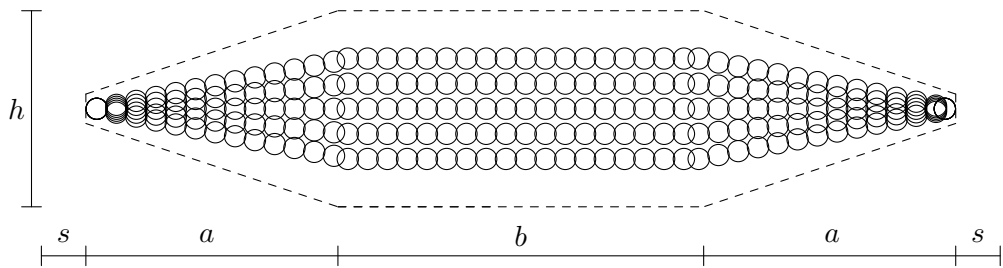
\begin{figure}
\begin{tikzpicture}[line cap=round,line join=round,>=triangle 45,x=1.3cm,y=1.3cm]
\clip(-0.38,-2) rectangle (11.37,2);
\draw [dashed ](3,1)-- (6.7,1);
\draw [dashed ](3,-1)-- (6.7,-1);
\draw [dashed ](3,-1)-- (4.54,-1);
\draw [dashed ](0.45,0.15)-- (0.45,-0.15);
\draw [dashed ] (0.45,0.15)-- (3,1);
\draw [dashed ] (0.45,-0.15)-- (3,-1);
\draw [dashed ] (9.25,0.15)-- (9.25,-0.15);
\draw [dashed ] (9.25,0.15)-- (6.7,1);
\draw [dashed ] (9.25,-0.15)-- (6.7,-1);

\draw (0,-1.5)-- (9.7,-1.5);

\draw (0.225,-1.3) node {$s$};
\draw (0,-1.40)-- (0,-1.6);
\draw (1.675,-1.3) node {$a$};
\draw (0.45,-1.40)-- (0.45,-1.6);
\draw (3,-1.40)-- (3,-1.6);
\draw (4.85,-1.3) node {$b$};
\draw (6.7,-1.4)-- (6.7,-1.6);
\draw (8,-1.3) node {$a$};
\draw (9.25,-1.4)-- (9.25,-1.6);
\draw (9.475,-1.3) node {$s$};
\draw (9.7,-1.4)-- (9.7,-1.6);

\draw (-0.1,-1)-- (-0.1,1);
\draw (-0.2,1)-- (0,1);
\draw (-0.2,-1)-- (0,-1);
\draw (-0.25, 0) node {$h$};

\foreach \x in {0,0.2,...,2.56} {
      \draw (\x +0.56,0) circle (4pt);
      \draw (\x +0.56,\x*0.1) circle (4pt);
      \draw (\x +0.56,\x*0.2) circle (4pt);
      \draw (\x +0.56,\x*-0.2) circle (4pt);
      \draw (\x +0.56,\x*-0.1) circle (4pt);
}

\foreach \x in {0,0.2,...,2.56} {
      \draw (\x +6.65,0) circle (4pt);
      \draw (\x +6.65,0.256-\x*0.1) circle (4pt);
      \draw (\x +6.65,0.512-\x*0.2) circle (4pt);
      \draw (\x +6.65,-0.256 + \x*0.1) circle (4pt);
      \draw (\x +06.65,-0.512+\x*0.2) circle (4pt);
}

\foreach \x in {3.1,3.3,...,6.5} {
\foreach \y in {0,1,...,4} {
      \draw (\x ,0.512- \y*0.256) circle (4pt);
}
}

 \draw (9.14 ,0) circle (4pt);
 \draw (9.14 ,0) circle (4pt);
 \draw (9.14 ,0) circle (4pt);
 \draw (9.14 ,0) circle (4pt);
 \draw (9.14 ,0) circle (4pt);

\end{tikzpicture}
\caption{An example of an Edge Gadget for an edge of weight $5$ in the proof of Theorem \ref{mainThm2}.}
\label{edgeG2}
\end{figure}

\begin{figure}
\begin{tikzpicture}[line cap=round,line join=round,>=triangle 45,x=1.3cm,y=1.3cm]
\clip(-0.38,-1.2) rectangle (11.37,2);

\foreach \x in {0, 10,...,350} {
 \draw ({4+ sin(\x)} ,{cos(\x)} ) circle (4pt);
}

\foreach \x in {0.1,0.2,...,1} {
 \draw (4+\x,0) circle (4pt);
}
\draw  (4,-0.2) node[anchor=east] {$c(v)$};
\fill  (4,0) circle (1pt);

\draw  (4.65,0.25) node[anchor=east] {$\gamma$};
\draw  (3.5,1.025) node[anchor=east] {$\sigma$};
 \draw[color=red]  (4,0) circle  (4pt);

\end{tikzpicture}
\caption{Vertex gadget for vertex $v$, where each black disk represents 16 copies of a single disk, and the red disk is the centroid disk $c(v)$.}
\label{vertexG2}
\end{figure}
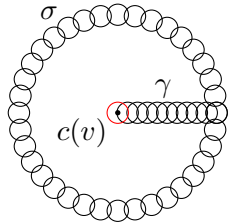

\section{Acknowledgments}

This work started when the second author visited Rolf Klein and his research group. He would like to thank them for the friendly atmosphere they provided. Furthermore, he would like to thank Peter Bra{\ss}  for various helpful comments  and Peter Terlecky for mentioning the problem to him.

\bibliographystyle{plain}

\end{document}